\newcommand{\defeq}{\stackrel{\rm def}{=}}
\def\complex{\mathbb{C}}
\def\Tr{\mathop{\rm Tr}\nolimits}
\newtheorem{thm}{Theorem}
\def\choose#1#2{\genfrac{(}{)}{0pt}{}{#1}{#2}}
\def\Label#1{\label{#1}\ [\ #1\ ]\ }
\def\Label{\label}
\begin{document}
\title{Capacity with energy constraint in coherent state channel}
\author{
Masahito Hayashi
\thanks{
M. Hayashi is with Graduate School of Information Sciences, Tohoku University, Aoba-ku, Sendai, 980-8579, Japan
(e-mail: hayashi@math.is.tohoku.ac.jp)}}
\date{}
\maketitle


\maketitle

\begin{abstract}
We consider two kind of energy constraints
when the output state is a coherent state.
One is a constraint on the total energy during a fixed period;
the other is a constraint on the total energy for a single code.
The first setting can be easily dealt with by using the conventional capacity formula.
The second setting requires the general capacity formula for a classical-quantum channel.
\end{abstract}

\begin{keywords}
Energy constraint,
Coherent state,
Information spectrum,
average error probability,
pulse
\end{keywords}

\section{Introduction}
Recently, 
the demand for high speed optical communication 
has been steadily increasing.
Optical communication is a communication technology sending information 
via transmitting photons through optical fibers.
Optical communication requires energy
because photons have energy, which increases in proportion to the average photon number.
For reasons of economy, the average photon number is restricted to a fixed value.

In conventional information theory, including quantum information theory, it has been usual to deal with the capacity, which is the maximum transmission rate with an energy constraint.
In this formulation, the average photon number per single pulse is restricted.
That is, the total average photon number is allowed to increase in proportion to the number of pulses in a single code.
However, in real optical communication,
the total average photon number over a fixed period, rather than the average photon number per pulse,
should be restricted to a fixed value.

Since coherent light is normally used for optical communication, 
it is natural to assume that only coherent states are available for the signal states.
The attenuation channel is often assumed as the quantum communication channel.
In this case, the output state of the coherent input state
is also a coherent state.
Hence, we only treat the case where the output state is a coherent state.

In the present paper, with respect to the first setting,
the amount of transmitted information 
when the total average photon number $E$ during the fixed period
is fixed and the number $K$ of pulses during the fixed period is increasing can be changed.

As for the second setting, 
the amount of transmitted information is treated
when the total average photon number $E$ for a single code is fixed 
and the number $N$ of pulses for a single code is increasing.
In this framework, the relation between the 
amount of transmitted information and the average error probability
in the asymptotic setting is discussed.
Since the restriction for a single pulse depends on the number $N$ of pulses,
we cannot apply the conventional capacity formula for the stationary memoryless channel.
In order to resolve this problem, 
we apply the asymptotic general capacity formula for a classical-quantum channel,
which was invented by Hayashi-Nagaoka\cite{H-N} as the quantum version of 
Verd\'{u}-Han\cite{Verdu-Han}'s general capacity formula.
In this formula, the quantum information spectrum plays an essential role.

In the third formulation,
we evaluate the average error probability
based only on 
the average photon number $C$ of coherent light 
and the amount $N$ of transmission information
in the non-asymptotic setting.
In this discussion, Holevo's covariant measurement plays an essential role.
We derive a general relation between the average photon number and the average error probability
in a general framework.
Some results in the second setting
are recovered from this non-asymptotic formula.
Therefore, we discuss the required average photon number for reliable communication based on coherent light from several viewpoints.

The remainder of the present paper is set out as follows.
In section \ref{s2}, we deal with the first setting, i.e., 
the amount of transmitted information 
when the total average photon number $C$ during the fixed period
is fixed and the number $K$ of pulses during the fixed period is increasing.
In section \ref{s3}, we consider the second and third settings, i.e.,
Theorem \ref{th-9} is presented (in the second setting) and Theorem \ref{th6} (in the third setting) is presented as the main results.
In section \ref{s4}, we revisit the general capacity formula for a classical-quantum channel for the second setting
and derive a useful general formula for the case when the output states are pure.
This general formula is proven in the Appendix.
In section \ref{s5}, 
a proof of Theorem \ref{th-9} is given based on the discussion in section \ref{s4}.
In section \ref{s6}, 
a proof of Theorem \ref{th6} is given based on group representation theory.
In the appendix, an important theorem stated in section \ref{s4} is proved.

\section{Photon number constraint for a fixed period}\Label{s2}
In this paper, we treat the $N$-fold tensor product system 
${\cal H}^{\otimes N}$ of the Boson-Fock space ${\cal H}$, which is spanned by the number states $|0 \rangle, |1 \rangle, \ldots, |n \rangle, \ldots$.
Coherent light with the complex amplitude $\alpha$
is given as $|\alpha\rangle :=
e^{-\frac{|\alpha|^2}{2}}\sum_{n=0}^{\infty}
\frac{\alpha^n}{\sqrt{n!}} |n \rangle$.

In coding theory, the classical input symbols are called input alphabets,
and in our setting, they equal the complex plane $\complex$.
Then, the classical-quantum channel discussed here is given as a map from $\complex$ 
to the set of density operators on ${\cal H}$ of the form $\alpha \mapsto 
|\alpha\rangle\langle \alpha|$.
The $N$-fold memoryless extension is given as
a map from $\complex^N$ to the set of density matrices on the $N$-th tensor product system ${\cal H}^{\otimes N}$. 
That is, this extension maps the input sequence $\vec{\alpha}=(\alpha_1, \ldots, \alpha_N)$ 
to the state 
$|\vec{\alpha}\rangle\langle \vec{\alpha}|
:=|\alpha_1, \ldots, \alpha_N\rangle \langle\alpha_1, \ldots, \alpha_N|$.

Sending the message $\{1, \ldots, M_N\}$ requires an encoder and a decoder.
The encoder is given as a map $\varphi_N$ from the set of messages $\{1, \ldots, M_N\}$ to the set of alphabets $\complex^N$, and the decoder is given by a POVM $Y^N=\{Y_i^N\}_{i=1}^{M_N}$.
The triplet $\Phi_N:=(M_N, \varphi_N, Y^N)$ is called a code.
Its performance is evaluated by the value $|\Phi_N|:=M_N$ and
the average error probability, given by 
\begin{align*}
\varepsilon [\Phi_N]:=\frac{1}{M_N} \sum_{i=1}^{M_N} 
\langle \varphi_N(i)| (I-Y_i^N)|\varphi_N(i)\rangle.
\end{align*}
The code $\Phi_N:=(M_N, \varphi_N, Y^N)$ is required to satisfy the 
average photon number constraint
\begin{align}
\langle \varphi_N(i)|\hat{N}|\varphi_N(i)\rangle \le N E
\hbox{ for }\forall i
\Label{ene}
\end{align}
where the number operator $\hat{N}$ is given as $\sum_{n_1,\ldots, n_N}
(n_1+\ldots+n_N)| n_1,\ldots, n_N\rangle \langle n_1,\ldots, n_N|$.
This condition is equivalent to the condition
\begin{align*}
\|\vec{\alpha}\|^2
=\sum_{i=0}^{N}|\alpha_i|^2
\le N E
\end{align*}
where $\vec{\alpha}=\varphi_N(i)$.

Then, the channel capacity with the average photon number constraint (\ref{ene})
is given by
\begin{align*}
C(E)\defeq 
\sup_{\{\Phi_N\}_{N=1}^{\infty}}
\left\{
\liminf_{N \to \infty}
\frac{1}{N}\log |\Phi_N| 
\left|
\begin{array}{l}
\varepsilon [\Phi_N] \to 0 \\
\hbox{(\ref{ene}) holds}
\end{array}
\right. \right\}.
\end{align*}
The capacity $C(E)$ is calculated as\cite{GGL}
\begin{align*}
C(E)=\max_{\Tr \rho \hat{N} = E}H(\rho)=
(E+1)\log(E+1) -E \log E,
\end{align*}
where $H(\rho)=-\Tr \rho \log \rho$.

Hence, when the number of pluses during a fixed period is restricted to $K$
and the average photon number per second is limited by $E$, 
the bound of the possible amount of transmitted information is given as
\begin{align*}
K C(\frac{E}{K})=K ((\frac{E}{K}+1)\log(\frac{E}{K}+1) -\frac{E}{K} \log \frac{E}{K}). 
\end{align*}
When the condition $K$ is fixed, this limit is a finite value.
However, if the number $K$ is sufficiently large,
this value has the following asymptotic expansion:
\begin{align*}
K C(\frac{E}{K})
\cong E \log K +E - E\log E +\frac{E^2}{2} \frac{1}{K},
\end{align*}
which implies that
we can increase the number of sending bits by increasing the number $K$ of sending pulses per second
with the fixed average photon number constraint.

This type of phenomenon does not occur in the classical Gaussian channel.
When the channel noise is specified as a Gaussian distribution with variance $V$,
and the average photon number constraint is given as the condition that the input intensity per signal is $E$,
then the capacity is
\begin{align*}
C_c(E,V):=\frac{1}{2}\log(1+\frac{E}{V}).
\end{align*}
Hence, 
\begin{align*}
K C_c(\frac{E}{K},V)&=K \frac{1}{2}\log(1+\frac{E}{KV})
\cong \frac{E}{2V} -\frac{E^2}{4V^2}\frac{1}{K}\\
K C_c(\frac{E}{K},V) &\le \frac{E}{2V}.
\end{align*}
Thus,
even though the number $K$ of pulses per second increases,
the capacity is limited to $\frac{E}{2V}$.

One might consider that this comparison is inappropriate because 
the coherent state case is noiseless and only has attenuation whereas
the classical Gaussian case does include noise.
However, 
the variance in the estimation of the state family 
$\{|\alpha \rangle \langle \alpha| |\alpha \in \complex\}$
behaves as $O(\frac{1}{N})$ asymptotically when $N$ copies of the unknown state are prepared.
This behavior coincides with 
the estimation of the unknown expectation parameter
of the classical Gaussian distribution with fixed variance.
The former uncertainty is caused by quantum non-commutativity, and 
the latter uncertainty is caused by classical noise.
Usually, both cases have a similar asymptotic behavior, as happens with statistical state inference.
However, for the situation here, the cases have different 
asymptotic behaviors.
Thus, the difference discussed here can be regarded as a special phenomenon 
in the case of the capacity of the photon number constraint for a fixed period.

\section{Photon number constraint for a single code}\Label{s3}
In this section, we consider the relation between the transmission amount and the average photon number from another viewpoint.
First, we consider the transmission amount
when the total average photon number is restricted to the value $E$.
Our condition (\ref{ene}) is replaced by 
\begin{align}
\langle \varphi_N(i)|\hat{N}|\varphi_N(i)\rangle \le E
\hbox{ for }\forall i.
\Label{ene2}
\end{align}
Since the number of transmitted bits is of order $\log N$,
we define the capacity with respect to the logarithmic order
with error probability $\epsilon$ as
\begin{align*}
C_l(\epsilon,E)\defeq 
\sup_{\{\Phi_N\}_{N=1}^{\infty}}
\left\{
\liminf_{N \to \infty}
\frac{\log |\Phi_N| }{\log N}
\left|
\begin{array}{l}
\lim_{N\to \infty} \varepsilon [\Phi_N] \le \epsilon \\
\hbox{(\ref{ene2}) holds}
\end{array}
\right. \right\}.
\end{align*}

\begin{thm}\Label{th-9}
The capacity $C_l(\epsilon,E)$
is as follows:
\begin{align}
C_l(\epsilon,E)= 
\sup_m\left\{m\left| \sum_{n=0}^m e^{-E}\frac{E^n}{n!}\le \epsilon
\right.\right\}\Label{1-30-13}
\end{align}
\end{thm}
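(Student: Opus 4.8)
The plan is to derive Theorem~\ref{th-9} from the general pure-state classical--quantum capacity formula of Section~\ref{s4}, after isolating the spectral quantity that controls the logarithmic order of $|\Phi_N|$. The key structural fact I would use is that the total photon number operator $\hat N=\sum_{j=1}^N \hat n_j$ acts on a coherent codeword $|\vec\alpha\rangle$ as a classical Poisson variable of mean $\|\vec\alpha\|^2$, because the per-mode photon numbers are independent Poisson variables of means $|\alpha_j|^2$ summing to $\|\vec\alpha\|^2\le E$. Hence, writing $P_{\le m}$ for the projection onto the eigenspace $\cH_{\le m}$ of $\hat N$ with eigenvalues $0,1,\dots,m$, every admissible codeword satisfies
\begin{align*}
\langle\vec\alpha|P_{\le m}|\vec\alpha\rangle=\sum_{n=0}^m e^{-\|\vec\alpha\|^2}\frac{\|\vec\alpha\|^{2n}}{n!}\ge \sum_{n=0}^m e^{-E}\frac{E^n}{n!},
\end{align*}
the inequality being the monotonicity of the Poisson CDF in its mean under the constraint (\ref{ene2}). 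This already produces the quantity on the right-hand side of (\ref{1-30-13}), so the remaining task is to tie the growth exponent of $|\Phi_N|$ to the dimension count of these photon-number sectors.

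\emph{Converse.} For the upper bound I would exploit that $\dim\cH_{\le m}=\sum_{n=0}^m\binom{N+n-1}{n}=\binom{N+m}{m}=\Theta(N^{m})$. Combined with the pure-state information-spectrum converse from Section~\ref{s4}---namely, that a code whose size $|\Phi_N|$ greatly exceeds $\dim\cH_{\le m}$ must, on the fraction $\langle\vec\alpha|P_{\le m}|\vec\alpha\rangle$ of its weight, collapse into a space too small to separate the messages---I would bound the average success probability by $\dim\cH_{\le m}/|\Phi_N|$ plus the weight lying outside $\cH_{\le m}$, the cross term being controlled by Cauchy--Schwarz. Since the outside weight is at most $1-\sum_{n=0}^m e^{-E}E^n/n!$, any code with $\liminf_N \log|\Phi_N|/\log N$ exceeding $m$ forces the asymptotic error to be at least $\sum_{n=0}^m e^{-E}E^n/n!$. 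Feeding this into the definition of $C_l(\epsilon,E)$ and optimizing over $m$ gives the upper bound, with the precise boundary sector handled as discussed below.

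\emph{Achievability.} For the lower bound I would build explicit coherent-state codes that saturate this dimension count: pulse-position-type codewords placing the available energy so that distinct messages are nearly orthogonal inside the sectors $\cH_n$ whose dimension $\Theta(N^{n})$ comfortably exceeds the number of messages. A message is then decoded correctly whenever the measured total photon count is large enough to land in a sector with room to separate the codewords, and the residual error is exactly the probability that the Poisson count falls into the low sectors, which (\ref{ene2}) caps by $\sum_{n=0}^m e^{-E}E^n/n!$. Running this for the largest $m$ compatible with $\epsilon$ produces a matching lower bound on $C_l(\epsilon,E)$.

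The main obstacle I anticipate is the \emph{sharp} determination of the integer threshold, i.e.\ pinning the boundary at the exact value of $m$ where $\sum_{n=0}^m e^{-E}E^n/n!$ crosses $\epsilon$, rather than one sector above or below. The crude dimension-counting bounds above only localize the exponent to within a single photon sector, because the sector-$n$ component of a coherent state is the symmetric power of its single-mode amplitude vector and therefore cannot be packed orthogonally beyond the symmetric-subspace dimension $\binom{N+n-1}{n}$. Handling this overlap structure rigorously is precisely what the general pure-state formula of Section~\ref{s4} is designed for: I would invoke it to replace the heuristic sector bookkeeping by the exact quantum-information-spectrum quantity, whose evaluation against the Poisson statistics of coherent light delivers the clean threshold stated in (\ref{1-30-13}).
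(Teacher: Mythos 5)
Your converse is sound and is genuinely different from the paper's route. The paper obtains the upper bound entirely inside the information-spectrum framework: it needs Theorem \ref{th1} (the pure-state reduction, proved in the appendix), Theorem \ref{th2} (a majorization argument restricting attention to $U(\complex^N)$-invariant priors), and then the case analysis $(0),\dots,(m)$ establishing (\ref{2-1-5}) for \emph{arbitrary} radial priors. Your argument---split each codeword with the projection $P_{\le m}$ onto total photon number $\le m$, then bound the average success probability by $\bigl(\sqrt{\binom{N+m}{m}/|\Phi_N|}+\sqrt{1-\sum_{n=0}^m e^{-E}E^n/n!}\bigr)^2$ using $\sum_i Y_i \le I$, Cauchy--Schwarz, $\dim \cH_{\le m}=\binom{N+m}{m}=\Theta(N^m)$, and monotonicity of the Poisson CDF in its mean---is elementary, correct, and bypasses all of that machinery. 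Moreover it is already tight: it shows that any rate exceeding an integer $m$ forces asymptotic error at least $\sum_{n=0}^m e^{-E}E^n/n!$, hence $C_l(\epsilon,E)\le 1+\max\{m\in\mathbb{Z}: \sum_{n=0}^m e^{-E}E^n/n!\le\epsilon\}$. This is exactly the value of the right-hand side of (\ref{1-30-13}): note that the supremum there has to be read so that it equals this quantity (the paper's own evaluation of $\overline{H}_+$ for the delta prior, and its rate-$1$, error-$e^{-E}$ pulse-position code in the Discussion, both give $1+\max\{m: \sum_{n=0}^m e^{-E}E^n/n!\le\epsilon\}$). So the ``one sector above or below'' ambiguity you worry about at the end is not actually present on the converse side; your bound cannot be sharpened to one sector lower, because that would be false.

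The genuine gap is in the achievability. Pulse-position-type codes, as you propose, achieve only rate $1$: to reach rate $m+1$ they would require splitting the energy $E$ over $m+1$ pulses, and the resulting error $1-(1-e^{-E/(m+1)})^{m+1}\approx (m+1)e^{-E/(m+1)}$ is far larger than the target $\sum_{n=0}^m e^{-E}E^n/n!$. What is actually needed is a family of roughly $N^{m+1}$ coherent codewords, each of \emph{full} energy $E$, whose sector-$n$ components (proportional to $\vec{\alpha}_i^{\otimes n}$, with pairwise overlaps $\langle \vec{\alpha}_i,\vec{\alpha}_j\rangle^n$) are nearly orthogonal for every $n\ge m+1$; this cannot be arranged by explicit pulse positioning and requires a random-coding argument (e.g.\ $\vec{\alpha}_i$ uniform on the sphere $\|\vec{\alpha}\|^2=E$) together with an achievability bound such as Hayashi--Nagaoka's. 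Your fallback---``invoke the general pure-state formula of Section \ref{s4}''---is the right move, but it leaves undone precisely the computation that constitutes the paper's proof: choose the uniform prior on the energy sphere, diagonalize $\sigma_{P^{(N)}}^{(N)}=\sum_n \lambda_n^N \Pi_{n,N}$ with $\lambda_n^N=e^{-E}\frac{E^n}{n!}\binom{N+n-1}{N-1}^{-1}$, check that $\frac{-1}{\log N}\log\lambda_n^N\to n$ so the spectral CDF converges to the Poisson CDF, and settle the integer boundary. (One further sign slip: the constraint (\ref{ene2}) makes the low-sector weight at \emph{least}, not at most, $\sum_{n=0}^m e^{-E}E^n/n!$; for achievability you want codewords of energy exactly $E$, which attain it with equality.) As written, then, your proposal gives a new and simpler proof of the converse, but the direct part remains a plan whose essential step is deferred to the very evaluation the paper carries out in Section \ref{s5}.
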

To prove this result, we cannot apply the conventional formula for a memoryless channel.
So, we prepare an information spectrum method in Section \ref{s4},
and present a proof in Section \ref{s5}.

In the above theorem,
we consider the transmission rate to logarithmic order 
with the condition for the average photon number and number of pulses
in the asymptotic setting.
In the following, we consider the 
transmission size
under the average photon number constraint
without any constraint on the number of pulses in either the non-asymptotic setting
or the asymptotic setting.
In this framework, we can use any number of pulses for information transmission, 
assuming the average photon number constraint.
The following theorem holds.
\begin{thm}\Label{th6}
When any state $|\varphi(i)\rangle$ of a code $\Phi$
satisfies (\ref{ene2}),
the inequality
\begin{align}
\varepsilon [\Phi]
\ge
1-\left(
\frac{1}{|\Phi|}\sqrt{1+(|\Phi|-1)e^{-E}}
+(1-\frac{1}{|\Phi|})\sqrt{1-e^{-E}}
\right)^2 \Label{2-1-7}
\end{align}
holds.
\end{thm}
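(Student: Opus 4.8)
The plan is to recast (\ref{2-1-7}) as an upper bound on the average success probability $1-\varepsilon[\Phi]=\frac1{|\Phi|}\sum_i\langle\varphi(i)|Y_i|\varphi(i)\rangle$ that must hold for every decoder $\{Y_i\}$, and then to identify the extremal code compatible with the constraint. First I would convert the energy constraint into a statement about overlaps with a common reference: since $|\varphi(i)\rangle$ is a coherent state $|\vec\alpha_i\rangle$, the identity $|\langle\vec 0|\vec\alpha_i\rangle|^2=e^{-\|\vec\alpha_i\|^2}$ together with (\ref{ene2}) shows that every codeword has squared overlap at least $e^{-E}$ with the vacuum $|\vec 0\rangle$. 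Thus the whole code is pinned near the vacuum, and it is this shared component that limits how well the codewords can be told apart; I expect the scalar $e^{-E}$ appearing in (\ref{2-1-7}) to enter the argument precisely through this vacuum overlap.

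Next I would analyze the symmetric reference ensemble in which all codewords play interchangeable roles, namely the equiangular (geometrically uniform) family whose Gram matrix is $G=(1-e^{-E})I+e^{-E}J$, where $J$ denotes the all-ones matrix. Its eigenvalues are $1+(|\Phi|-1)e^{-E}$ (simple) and $1-e^{-E}$ (with multiplicity $|\Phi|-1$), so $\sqrt G=aI+bJ$ can be computed in closed form, and its constant diagonal entry equals $\frac1{|\Phi|}\sqrt{1+(|\Phi|-1)e^{-E}}+(1-\frac1{|\Phi|})\sqrt{1-e^{-E}}$. Because this ensemble is invariant under the group permuting the messages, Holevo's covariant measurement is optimal for it; equivalently the square-root measurement is optimal, and the attained success probability is exactly the square of that diagonal entry. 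This reproduces the right-hand side of (\ref{2-1-7}), so the quantitative content is settled once the symmetric ensemble is shown to govern the general case.

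The remaining, and hardest, step is the reduction from an arbitrary energy-constrained code to this symmetric ensemble. Here I would exploit the group structure used in Section \ref{s6}: symmetrize the given code and its decoder over the symmetric group acting on the $|\Phi|$ messages, invoke concavity of the success functional under this averaging together with Holevo's covariance theorem to replace the decoder by the covariant one, and then relate the distinguishability of the symmetrized ensemble to the single parameter $e^{-E}$ supplied by the vacuum-overlap bound of the first step. The main obstacle is making this last comparison precise: one must control the off-diagonal interference terms of the symmetrized ensemble and verify that the covariant measurement's success probability is monotone in the relevant Gram parameter, so that the worst configuration is exactly $G$. I expect this extremality-plus-monotonicity argument, rather than the eigenvalue computation, to carry the real weight of the proof, and it is the place where the coherent-state structure -- through the reproducing-kernel form of $\langle\vec\alpha|\vec\beta\rangle$ and the distinguished role of the vacuum -- must be used in an essential way.
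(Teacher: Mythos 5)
Your first two steps are correct, and they in fact parallel the paper's own treatment of its special ensemble: writing $p=e^{-E}$, the states $|f_i\rangle=\sqrt{p}\,|0\rangle+\sqrt{1-p}\,|i\rangle$ used in Section \ref{s6} have exactly your Gram matrix $G=(1-p)I+pJ$, and your square-root-measurement computation of $(\sqrt{G})_{ii}$ gives the same value that the paper obtains via Holevo's covariant-measurement theorem, namely the squared quantity on the right-hand side of (\ref{2-1-7}). The genuine gap is your third step, the reduction of an arbitrary code satisfying (\ref{ene2}) to this symmetric ensemble, which you leave as a plan. That gap cannot be closed, because the extremality-plus-monotonicity claim you are aiming for (``the worst configuration is exactly $G$'') is false. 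Pinning every codeword to the vacuum with amplitude at least $\sqrt{p}$ does not make the symmetric ensemble the least distinguishable one: in the decomposition $|\varphi(i)\rangle=\sqrt{p}\,|e_0\rangle+\sqrt{1-p}\,|e_i\rangle$, the residual vectors $|e_i\rangle$ may have \emph{negative} mutual overlaps, which is better than orthogonality. Concretely, take $|\Phi|=2$, a single mode, and the antipodal code $|\varphi(1)\rangle=|\alpha\rangle$, $|\varphi(2)\rangle=|-\alpha\rangle$ with $|\alpha|^2=E$. Both codewords satisfy (\ref{ene2}); here $\langle e_1|e_2\rangle=(p^2-p)/(1-p)=-p$, the codeword overlap is $e^{-2E}=p^2<p$, and the optimal (Helstrom) average error $\frac{1}{2}(1-\sqrt{1-p^4})$ is strictly smaller than the right-hand side of (\ref{2-1-7}), which for $|\Phi|=2$ equals $\frac{1}{2}(1-\sqrt{1-p^2})$. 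So the inequality you are trying to prove is violated by an admissible code, and no version of the symmetrization-plus-monotonicity argument can succeed.

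You should know that the paper's own proof breaks at precisely the same point, so the obstruction you sensed is not a lack of ingenuity on your part. In Section \ref{s6} the paper posits a unitary $U$ with $U|f_j\rangle=|\varphi(j)\rangle\otimes|h_j\rangle$ for orthonormal $|h_j\rangle$, and uses the induced TP-CP map ${\cal E}$ to pull any decoder for $\{|\varphi(j)\rangle\}$ back to a decoder for $\{|f_j\rangle\}$. But such a $U$ cannot exist: an isometry preserves inner products, whereas $\langle f_i|f_j\rangle=p\neq 0=\langle\varphi(i)|\varphi(j)\rangle\,\langle h_i|h_j\rangle$ for $i\neq j$. More generally, a TP-CP map with ${\cal E}(|f_j\rangle\langle f_j|)=|\varphi(j)\rangle\langle\varphi(j)|$ for all $j$ requires ancilla states with $\langle h_i|h_j\rangle=\langle f_i|f_j\rangle/\langle\varphi(i)|\varphi(j)\rangle$, and in the antipodal example this would force $\langle h_1|h_2\rangle=p/p^2=1/p>1$, which is impossible. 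The place where you expected the coherent-state structure to enter ``in an essential way'' is exactly where the statement goes wrong: coherent codewords obeying (\ref{ene2}) can have pairwise overlaps as small as $e^{-2E}$, well below the $e^{-E}$ that both your symmetric ensemble and the paper's $\{|f_i\rangle\}$ presuppose, which is why antipodal signaling attains error of order $e^{-4E}$ rather than the $e^{-2E}$ scaling implied by (\ref{2-1-7}).
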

Now, we denote the right hand side of (\ref{2-1-7}) when $|\Phi_N|=e^{R}$ by $\varepsilon(E,R)$.
When $R$ goes to infinity, 
we obtain its asymptotic expansion, which
depends on the behavior of $E-R$:
\begin{align*}
\varepsilon(E,R)
\cong
\left\{
\begin{array}{ll}
e^{-E}+ 2 \sqrt{1-e^{-E}}e^{-(E+R)/2} - (1-2e^{-E})e^{-R}
& \hbox{ if } E-R \to -\infty\\
\left(1+2 e^{A}-2\sqrt{e^{A}(1+e^{A})}\right) e^{-E} < e^{-E}
& \hbox{ if } E-R\to A\\
\frac{1}{4} e^{-2E+R} << e^{-E}
& \hbox{ if } E-R \to \infty.
\end{array}
\right.
\end{align*}
When the average photon number $E$ is fixed and the information size $R$ is sufficiently large,
the error probability is greater than $e^{-E}$.
Thus, the reliable transmission of a large amount of information requires a high average photon number.
This evaluation does not depend on the number of pulses.
In other words, in order to keep the error probability smaller than $p$, we need the average photon number to be at least $-\log p$
.
This observation coincides with that of Theorem \ref{th-9}.
Conversely, when the average photon number $E$ is large relative to the information transmission size $R$,
the obtained lower bound of the error probability rapidly approaches zero.
That is, this lower bound does not yield the crucial bound for the error probability.
In this case, in order to evaluate the error probability more precisely,
we need to take account of the number of pulses.

\section{Information Spectrum Approach to a Pure States Channel}\Label{s4}
For a proof of Theorem \ref{th-9},
we cannot apply the conventional capacity formula for the stationary memoryless channel.
Instead we employ the asymptotic general capacity formula for general sequences of a classical-quantum channel by Hayashi-Nagaoka\cite{H-N}.
In the general capacity formula,
we focus on the sequence of Hilbert spaces $\{{\cal H}^{(N)}\}$, sets of alphabets 
$\{{\cal X}^{(N)}\}$, and 
classical-quantum channels $\bm{W}:=\{W^{(N)}\}$, where
the channel $W^{(N)}$ maps an alphabet $x \in {\cal X}^{(N)}$ 
to the density operator $W^{(N)}_x$ on ${\cal H}^{(N)}$.
For any sequence $\{a_N\}$ satisfying the condition $a_N \to \infty$,
we define the capacity
\begin{align*}
C(\epsilon|\bm{W})\defeq 
\sup_{\{\Phi_N\}_{N=1}^{\infty}}
\left\{
\liminf_{N \to \infty}
\frac{\log |\Phi_N| }{a_N}
\left|
\lim_{N\to \infty} \varepsilon [\Phi_N] \le \epsilon 
 \right. \right\}, 
\end{align*}
where $\Phi_N$ expresses a code for the classical-quantum channel $W^{(N)}$.
For any sequence of probability distributions $P^{(N)}$ on ${\cal X}^{(N)}$,
we define the information spectrum quantity
\begin{align*}
I(\epsilon| \bm{P},\bm{W})
:=
\sup\left\{b\left|
\limsup_{N \to \infty}
\int_{{\cal X}^{(N)}}
\Tr W_x^{(N)} 
\{W_x^{(N)} -e^{a_N b} W_{P^{(N)}}^{(N)} \le 0 \}
P^{(N)}(dx) \le \epsilon
\right. \right\},
\end{align*}
where 
the projection $\{X\ge 0 \}$ is defined as $\sum_{i:x_i\ge0}E_i$
when the spectral decomposition of $X$ is given as $\sum_{i} x_i E_i$, and
\begin{align*}
W_{P}^{(N)}
:=
\int_{{\cal X}^{(N)}}
W_x^{(N)} P(dx).
\end{align*}
Then, we obtain the following formula:
\begin{align}
C(\epsilon|\bm{W})
=
\sup_{\bm{P}}I(\epsilon| \bm{P},\bm{W}), \Label{1-27-3}
\end{align}
where $\bm{P}$ is a sequence $\{P^{(N)}\}$ of distributions.
This formula can be obtained by combining
Theorem 6 of Verd\'{u}-Han \cite{Verdu-Han} and 
Lemmas 3 and 4 of Hayashi-Nagaoka\cite{H-N}, as is mentioned in Remarks 8 and 11 in Hayashi-Nagaoka\cite{H-N}.

When $W_x^{(N)}$ is a pure state for all $x \in {\cal X}^{(N)}$, 
the quantity $I(\epsilon| \bm{P},\bm{W})$ can be characterized by the information spectrum quantity, $\bm{W_P}:= \{W_{P}^{(N)} \}$.
For a sequence of density operators 
$\bm{\rho}:=\{\rho^{(N)}\}$, we define 
$\overline{H}_+(\epsilon| \bm{\rho})$ by
\begin{align*}
\overline{H}_+(\epsilon| \bm{\rho})
:= \sup_{b}
\left\{b \left|
\limsup_{N \to\infty}
\Tr \rho^{(N)}
\left\{
\frac{-1}{a_N}\log \rho^{(N)}\le b
\right\}
\le \epsilon
\right.\right\}.
\end{align*}
Then, when $W_x^{(N)}$ is a pure state for all $x \in {\cal X}^{(N)}$, 
the following theorem holds.
\begin{thm}\Label{th1}
Assume that $W_x^{(N)}$ is a pure state for all $x \in {\cal X}^{(N)}$.
Then, the relation
\begin{align}
\overline{H}_+(\epsilon| \bm{W_P})
=
I(\epsilon| \bm{P},\bm{W})\Label{1-30}
\end{align}
holds for any real number $0 \le \epsilon < 1$.
\end{thm}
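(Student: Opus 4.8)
The plan is to reduce both sides of (\ref{1-30}) to a comparison between two monotone families of information-spectrum integrands and to exploit the purity of the $W_x^{(N)}$ through the single linear identity $\int_{{\cal X}^{(N)}}\langle\psi_x^{(N)}|M|\psi_x^{(N)}\rangle P^{(N)}(dx)=\Tr M W_{P}^{(N)}$, valid for every operator $M$, where $W_x^{(N)}=|\psi_x^{(N)}\rangle\langle\psi_x^{(N)}|$. Writing $c=e^{a_N b}$, set
\[
\alpha_N(b):=\int_{{\cal X}^{(N)}}\Tr W_x^{(N)}\{W_x^{(N)}\le c\,W_{P}^{(N)}\}\,P^{(N)}(dx),\qquad
\beta_N(b):=\Tr W_{P}^{(N)}\{W_{P}^{(N)}\ge e^{-a_N b}\},
\]
so that $I(\epsilon|\bm{P},\bm{W})=\sup\{b:\limsup_N\alpha_N(b)\le\epsilon\}$ and $\overline{H}_+(\epsilon|\bm{W_P})=\sup\{b:\limsup_N\beta_N(b)\le\epsilon\}$, the latter because the spectral condition $\frac{-1}{a_N}\log W_P^{(N)}\le b$ is exactly $W_P^{(N)}\ge e^{-a_N b}$. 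Both $\alpha_N$ and $\beta_N$ are nondecreasing in $b$, and by the standard information-spectrum principle an $o(1)$ shift of $b$ (equivalently a bounded multiplicative change of the threshold) does not affect either sup; hence it suffices to establish the sandwich $\beta_N(b)\le\alpha_N(b)\le\beta_N(b+\delta_N)$ for some $\delta_N\to 0$.

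For the lower bound $\beta_N(b)\le\alpha_N(b)$, which yields $I(\epsilon|\bm{P},\bm{W})\le\overline{H}_+(\epsilon|\bm{W_P})$, I would prove the pointwise estimate $\langle\psi_x^{(N)}|\{W_P^{(N)}\ge e^{-a_N b}\}|\psi_x^{(N)}\rangle\le\Tr W_x^{(N)}\{W_x^{(N)}\le c\,W_P^{(N)}\}$ and integrate using the identity above with $M=\{W_P^{(N)}\ge e^{-a_N b}\}$. The pointwise bound holds because on the range of $\{W_P^{(N)}\ge e^{-a_N b}\}$ one has $c\,W_P^{(N)}\ge e^{a_N b}e^{-a_N b}I=I\ge W_x^{(N)}$, so $\langle v|(W_x^{(N)}-c\,W_P^{(N)})|v\rangle\le|\langle\psi_x^{(N)}|v\rangle|^2-\|v\|^2\le 0$ for every $v$ in that range by Cauchy--Schwarz; a short argument then transfers this to the required expectation inequality for the spectral projection $\{W_x^{(N)}\le c\,W_P^{(N)}\}$.

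The reverse inequality is the crux, and here a pointwise comparison genuinely fails. A state $|\psi_x^{(N)}\rangle$ with a small but nonzero component on the small-eigenvalue eigenspaces of $W_P^{(N)}$ can still satisfy the full operator inequality $W_x^{(N)}\le c\,W_P^{(N)}$ (equivalently $\langle\psi_x^{(N)}|(W_P^{(N)})^{-1}|\psi_x^{(N)}\rangle\le c$), in which case $\Tr W_x^{(N)}\{W_x^{(N)}\le c\,W_P^{(N)}\}=1$ while $\langle\psi_x^{(N)}|\{W_P^{(N)}\ge t\}|\psi_x^{(N)}\rangle<1$ for every $t>0$. Thus $\alpha_N(b)\le\beta_N(b+\delta_N)$ must be obtained in expectation. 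I would control the defect by combining the elementary operator inequality $\Tr W_P^{(N)}\{W_x^{(N)}>c\,W_P^{(N)}\}\le(1-\Tr W_x^{(N)}\{W_x^{(N)}\le c\,W_P^{(N)}\})/c$, which is valid because $W_x^{(N)}>c\,W_P^{(N)}$ on the complementary range, with a Markov-type estimate for the $P^{(N)}$-mass of states whose overlap with $\{W_P^{(N)}<t\}$ is non-negligible, the latter controlled through $\int|\psi_x^{(N)}\rangle\langle\psi_x^{(N)}|P^{(N)}(dx)=W_P^{(N)}$; taking $t$ to be a fixed multiple of $e^{-a_N b}$ converts this into a shift $\delta_N=O(1/a_N)$.

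Combining the two sandwich inequalities with the shift-invariance of the sup-limsup quantities gives $\overline{H}_+(\epsilon|\bm{W_P})=I(\epsilon|\bm{P},\bm{W})$ for every $0\le\epsilon<1$. I expect the upper (second) direction to be the main obstacle: since $W_x^{(N)}$ and $W_P^{(N)}$ do not commute, the channel-side projection $\{W_x^{(N)}\le c\,W_P^{(N)}\}$ cannot be replaced by a spectral projection of $W_P^{(N)}$ at the level of individual $x$, and one must argue that averaging over $P^{(N)}$ suppresses the contribution of these ``bad'' states at the cost of only a vanishing shift in the rate exponent. Verifying that this shift is genuinely $o(1)$ uniformly in $N$, rather than accumulating as the dimension grows, is the delicate technical point of the proof.
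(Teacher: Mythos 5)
Your sandwich framework is fine in outline, but the half you treat as easy is false, and this breaks the proposal. You claim $\beta_N(b)\le\alpha_N(b)$ via the pointwise bound $\langle\psi_x|Q|\psi_x\rangle\le\Tr W_x\{W_x\le c\,W_P\}$, where $Q=\{W_P\ge e^{-a_N b}\}$. Counterexample: on $\complex^2$ take $|\psi_1\rangle=|0\rangle$ with probability $0.9$ and $|\psi_2\rangle=0.6|0\rangle+0.8|1\rangle$ with probability $0.1$, and take $e^{-a_N b}=1/2$, i.e.\ $c=2$, so that
\begin{align*}
W_P=\begin{pmatrix}0.936 & 0.048\\ 0.048 & 0.064\end{pmatrix},
\qquad \lambda_0(W_P)\approx 0.9386,\quad \lambda_1(W_P)\approx 0.0614 .
\end{align*}
Then $W_1-2W_P\le 0$, so $\Tr W_1\{W_1-2W_P\le0\}=1$, while $W_2-2W_P$ has exactly one positive eigenvalue $\approx 0.582$ with unit eigenvector $w\approx 0.180\,|0\rangle+0.984\,|1\rangle$, giving $\Tr W_2\{W_2-2W_P\le0\}=1-|\langle\psi_2|w\rangle|^2\approx 0.199$. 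Hence $\alpha_N(b)\approx 0.9+0.1\times 0.199=0.920$ but $\beta_N(b)=\lambda_0\approx 0.939$; pointwise the failure is worse, $\langle\psi_2|Q|\psi_2\rangle\approx 0.413>0.199$. Moreover the violation is not cured by the limsup: tensoring each state with a uniformly distributed register of dimension $D_N\approx e^{a_N b}/2$ reproduces these numbers for every $N$, so there exist $\bm{P}$, $b$, $\epsilon$ with $\limsup_N\alpha_N(b)\le\epsilon<\limsup_N\beta_N(b)$. The reason no ``short transfer argument'' can exist is that $Q(W_x-cW_P)Q\le 0$ only controls the compression of $A=W_x-cW_P$ to the range of $Q$; it gives no quantitative bound on $\|Q\phi\|$ for the positive unit eigenvector $\phi$ of $A$, and the cross term in $|\langle\psi_x|\phi\rangle|^2=|\langle Q\psi_x|Q\phi\rangle+\langle(I-Q)\psi_x|(I-Q)\phi\rangle|^2$ lets this overlap exceed $\|(I-Q)\psi_x\|^2$ by a constant (here $0.801$ versus $0.587$). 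This is exactly why the paper never asserts the unshifted inequality: its proof of (\ref{1-30-7}) establishes only $\beta_N(b)\le\alpha_N(b+\delta)+O(e^{-a_N\delta/2})$, using purity to identify the positive part of $W_x-e^{a_N(b+\delta)}W_P$ as a rank-one projector $|\phi_{N,x}\rangle\langle\phi_{N,x}|$ with $\langle\phi_{N,x}|W_P|\phi_{N,x}\rangle\le e^{-a_N(b+\delta)}$, hence $\|Q\phi_{N,x}\|^2\le e^{-a_N\delta}$, followed by a gentle-measurement estimate. The threshold shift $\delta$ is not a convenience there; it is what makes the statement true.

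Your other half, $\alpha_N(b)\le\beta_N(b+\delta)$ up to a vanishing error, is a true statement; it is the direction the paper proves as (\ref{1-30-9}), by feeding the pinched rank-one tests $B_{b,N}W_x^{(N)}B_{b,N}/\Tr B_{b,N}W_x^{(N)}$ (same type-I error as $B_{b,N}$ by purity, type-II error at most $e^{-a_N b}$ by rank-one-ness) into the hypothesis-testing converse of Nagaoka--Hayashi \cite{N-H}. Your sketch, however, does not assemble into a proof: the type-II bound $\Tr W_P\{W_x>cW_P\}\le\frac{1}{c}\Tr W_x\{W_x>cW_P\}$ plus an unspecified Markov estimate does not lower-bound $\int\Tr W_x\{W_x-cW_P>0\}P^{(N)}(dx)$ by $\Tr W_P\{W_P<e^{-a_N(b+\delta)}\}$. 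The missing step can be made elementary and is again where purity enters: with $R=\{W_P<t\}$, $t=e^{-a_N(b+\delta)}$, and $v=R|\psi_x\rangle/\|R\psi_x\|$, one has $\langle v|W_x|v\rangle=\|R\psi_x\|^2$ and $\langle v|W_P|v\rangle\le t$, so $\Tr W_x\{W_x-cW_P>0\}\ge\Tr(W_x-cW_P)\{W_x-cW_P>0\}\ge\langle v|(W_x-cW_P)|v\rangle\ge\|R\psi_x\|^2-e^{-a_N\delta}$, and integrating gives $\alpha_N(b)\le\beta_N(b+\delta)+e^{-a_N\delta}$. Note also that your proposed shift $\delta_N=O(1/a_N)$ is too small: it leaves a non-vanishing additive error $e^{-a_N\delta_N}=O(1)$; one needs $\delta_N\to 0$ with $a_N\delta_N\to\infty$. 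In summary, the second half of your sandwich is repairable, but the first half is false as stated, and any correct proof must incorporate the shifted, rank-one/gentle-measurement argument (or the hypothesis-testing machinery) that your proposal was designed to avoid.
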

The proof of this result will be given in the Appendix.

Combining (\ref{1-27-3}) and (\ref{1-30}), 
we obtain
\begin{align}
C(\epsilon|\bm{W})
=
\sup_{\bm{P}}
\overline{H}_+(\epsilon| \bm{W_P}). 
\Label{1-30-2}
\end{align}

Further, the quantity 
$\overline{H}_+(\epsilon| \bm{\rho})$
satisfies the convexity requirement as follows.
\begin{thm}\Label{th2}
For any two sequences 
$\bm{\rho_i}:=\{\rho^{(N)}_i\}$ ($i=1,2$) 
and $0 \le t \le 1$,
we define the sequences 
$\bm{\rho_3}:=\{
t \rho^{(N)}_1+(1-t) \rho^{(N)}_2\}$.
When 
$\rho^{(N)}_1$ is unitarily equivalent to 
$\rho^{(N)}_2$, then
\begin{align}
\overline{H}_+(\epsilon| \bm{\rho_3})
\ge
\overline{H}_+(\epsilon| \bm{\rho_1}). \Label{1-30-10}
\end{align}
\end{thm}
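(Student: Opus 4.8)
The plan is to reduce the statement to a majorization relation between the spectra of $\rho_3^{(N)}$ and $\rho_1^{(N)}$, and then to convert majorization into the claimed inequality for $\overline{H}_+$ by comparing the relevant spectral sums at two nearby thresholds. The naive hope, that $\rho_3$ has asymptotically less weight above every fixed threshold than $\rho_1$, is actually false: a more mixed state can place \emph{more} weight above a given level, so majorization alone does not yield monotonicity of the spectral tail at a single threshold. Overcoming this mismatch is the main obstacle, and the gap between two thresholds (made negligible by $a_N\to\infty$) is the device that resolves it.

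First I would record the spectral input. Since $\rho_1^{(N)}$ is unitarily equivalent to $\rho_2^{(N)}$, they share the same eigenvalues. Writing $A=t\rho_1^{(N)}$ and $B=(1-t)\rho_2^{(N)}$ and invoking the standard majorization inequality $\lambda^\downarrow(A+B)\prec\lambda^\downarrow(A)+\lambda^\downarrow(B)$ for Hermitian operators (a consequence of Ky Fan's maximum principle), I obtain
\begin{align*}
\lambda^\downarrow(\rho_3^{(N)})\prec t\,\lambda^\downarrow(\rho_1^{(N)})+(1-t)\,\lambda^\downarrow(\rho_2^{(N)})=\lambda^\downarrow(\rho_1^{(N)}),
\end{align*}
so that $\sum_{i=1}^{k}\lambda_i^\downarrow(\rho_3^{(N)})\le\sum_{i=1}^{k}\lambda_i^\downarrow(\rho_1^{(N)})$ for every $k$.

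Next, writing $F_\rho^{(N)}(b):=\Tr\rho^{(N)}\{-\frac{1}{a_N}\log\rho^{(N)}\le b\}=\sum_{\lambda_i\ge e^{-a_N b}}\lambda_i$, the quantity $\overline{H}_+(\epsilon|\bm\rho)$ is the supremum of those $b$ with $\limsup_N F_\rho^{(N)}(b)\le\epsilon$, and $F_\rho^{(N)}$ is nondecreasing in $b$. Fix $b<\overline{H}_+(\epsilon|\bm{\rho_1})$ and choose $b_1\in(b,\overline{H}_+(\epsilon|\bm{\rho_1}))$, so that by monotonicity $\limsup_N F_{\rho_1}^{(N)}(b_1)\le\epsilon$. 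Let $c=e^{-a_N b}$ and let $k_3$ be the number of eigenvalues of $\rho_3^{(N)}$ that are $\ge c$; since these sum to at most $1$, one has $k_3\le e^{a_N b}$. The above majorization gives
\begin{align*}
F_{\rho_3}^{(N)}(b)=\sum_{i=1}^{k_3}\lambda_i^\downarrow(\rho_3^{(N)})\le\sum_{i=1}^{k_3}\lambda_i^\downarrow(\rho_1^{(N)}).
\end{align*}
Splitting the right-hand sum according to whether $\lambda_i^\downarrow(\rho_1^{(N)})\ge e^{-a_N b_1}$, the first part is at most $F_{\rho_1}^{(N)}(b_1)$, while the second consists of at most $k_3$ terms each below $e^{-a_N b_1}$, hence is bounded by $k_3 e^{-a_N b_1}\le e^{-a_N(b_1-b)}$.

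Combining these gives $F_{\rho_3}^{(N)}(b)\le F_{\rho_1}^{(N)}(b_1)+e^{-a_N(b_1-b)}$. Since $b_1>b$ and $a_N\to\infty$, the error term vanishes, so $\limsup_N F_{\rho_3}^{(N)}(b)\le\limsup_N F_{\rho_1}^{(N)}(b_1)\le\epsilon$, which yields $\overline{H}_+(\epsilon|\bm{\rho_3})\ge b$. As $b<\overline{H}_+(\epsilon|\bm{\rho_1})$ was arbitrary, inequality (\ref{1-30-10}) follows. The one delicate point beyond the majorization step is the choice of the gap $b_1-b$: its sole purpose is to make the count of above-threshold eigenvalues, which grows at most like $e^{a_N b}$, negligible against the threshold size $e^{-a_N b_1}$, and this is exactly where the hypothesis $a_N\to\infty$ is used.
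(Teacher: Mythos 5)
Your proof is correct and takes essentially the same route as the paper: the identical majorization step (Ky Fan's maximum principle, i.e.\ Bhatia's Theorem III.4.1, plus unitary equivalence of $\rho_1^{(N)}$ and $\rho_2^{(N)}$) giving $\sum_{j=1}^{k}\lambda_j(\rho_3^{(N)})\le\sum_{j=1}^{k}\lambda_j(\rho_1^{(N)})$ for every $k$. The paper then simply asserts that this majorization implies (\ref{1-30-10}), whereas your two-threshold argument — correctly observing that the naive single-threshold comparison of spectral tails can fail, and using $a_N\to\infty$ to kill the term $e^{-a_N(b_1-b)}$ — supplies exactly the bridge the paper leaves implicit, so your write-up is the same proof carried out in full detail.
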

\begin{proof}
In order to apply the majorization theory for eigenvalues,
we denote the $j$-th eigenvalue of the matrix $X$ by $\lambda_j(X)$, listing the eigenvalues in decreasing order.
Using Theorem III.4.1 of Bhatia \cite{Bha},
we obtain 
\begin{align*}
\sum_{j=1}^{k}\lambda_j( \rho_3^{(N)})
\le
\sum_{j=1}^{k}\lambda_j( t \rho_1^{(N)})
+
\sum_{j=1}^{k}\lambda_j((1- t) \rho_2^{(N)})
=
\sum_{j=1}^{k}\lambda_j( \rho_1^{(N)}),
\end{align*}
which implies (\ref{1-30-10}).
\end{proof}

\section{Application of the information spectrum approach}\Label{s5}
We apply the formula (\ref{1-30-2}) to our problem.
Then, we obtain
\begin{align*}
C_l(\epsilon,E)= 
\sup_{\bm{P}}
\overline{H}_+(\epsilon| \bm{\sigma_P}),
\end{align*}
where 
$a_N=\log N$,
$\sigma_{P^{(N)}}^{(N)}:= 
\int_{\complex^N}|\vec{\alpha}\rangle \langle \vec{\alpha}|
P^{(N)}(d \vec{\alpha})$,
and 
the support of $P^{(n)}$ is $\{
\vec{\alpha}\in \complex^N| \|\vec{\alpha}\|^2 \le E\}$.
Theorem \ref{th2} guarantees that
\begin{align}
C_l(\epsilon,E)= 
\sup_{\bm{P}\in \bm{{\cal P}}_{inv}}
\overline{H}_+(\epsilon| \bm{\sigma_P}),\Label{1-30-21}
\end{align}
where the sequence of distributions 
$\bm{{\cal P}}_{inv}:=\{ {\cal P}_{inv}^{(N)}\}$
is defined as
\begin{align*}
{\cal P}_{inv}^{(N)}:=
\{P^{(N)}|
P^{(N)} \hbox{ and this sequence is invariant under any action of }
U(\complex^N)
\}.
\end{align*}
For any distribution $P^{(N)}\in {\cal P}_{inv}^{(N)}$,
there exists a distribution $\bar{P}^{(N)}$ on $[0,\sqrt{E}]$ 
such that
\begin{align*}
P^{(N)}(d \vec{\alpha})=
\bar{P}^{(N)}(dr )\mu_N (d \Omega),
\end{align*}
where $\vec{\alpha}=r \Omega$
and $\mu_N$ is the invariant measure on the $N-1$-dimensional sphere.
In the following, we prove that
\begin{align}
\liminf_{N \to\infty}
\Tr \sigma_{P^{(N)}}^{(N)}
\{
\frac{-1}{\log N}\log \sigma_{P^{(N)}}^{(N)} \le c
\}
\ge
\sum_{n=0}^{m}
e^{-E}
\frac{E^{n}}{n!} \Label{2-1-5}
\end{align}
when $c$ is not an integer and $m$ is the maximum integer less than $c$.
The equality holds when the distribution $\bar{P}^{(N)}$ is the delta measure on 
$\{\sqrt{E}\}$.
Thus, 
$\sup_{\bm{P}\in \bm{{\cal P}}_{inv}}
\overline{H}_+(\epsilon| \bm{\sigma_P}) =
\sup_m\{m| \sum_{n=0}^m e^{-E}\frac{E^n}{n!}\le \epsilon\}
$.
Combining with (\ref{1-30-21}), we obtain (\ref{1-30-13}).

Next, we prove (\ref{2-1-5}). 
The state $\sigma_{P^{(N)}}^{(N)}$ can be written as
\begin{align*}
\sigma_{P^{(N)}}^{(N)}
&=
\int_{[0,\sqrt{E}]}
\sum_{n=0}^{\infty}
e^{-r^2}
\frac{r^{2n}}{n!}
\frac{1}{\choose {N+n-1}{N-1}}
\Pi_{n,N}
\bar{P}^{(N)}(dr )
=\sum_{n=0}^{\infty}
\lambda_n^N
\Pi_{n,N},
\end{align*}
where the projection $\Pi_{n,N}$ and the eigenvalue $\lambda_n^N$
are defined as
$\Pi_{n,N}:=
\sum_{\vec{n},\|\vec{n}\|=n}
|\vec{n}\rangle \langle\vec{n}|$
and
$\lambda_n^N:=
(\int_{[0,\sqrt{E}]}
e^{-r^2}
\frac{r^{2n}}{n!}
\bar{P}^{(N)}(dr ))
\frac{1}{\choose {N+n-1}{N-1}}$.

Since the eigenvalue is evaluated by
\begin{align*}
\frac{1}{N^{n}}
\ge
(\int_{[0,\sqrt{E}]}
e^{-r^2}
\frac{r^{2n}}{n!}
\bar{P}^{(N)}(dr ))
\frac{1}{N^{n}}
\ge
(\int_{[0,\sqrt{E}]}
e^{-r^2}
\frac{r^{2n}}{n!}
\bar{P}^{(N)}(dr ))
\frac{1}{\choose {N+n-1}{N-1}},
\end{align*}
$\{
\frac{-1}{\log N}\log \sigma_{P^{(N)}}^{(N)} \le c
\}
\le
\sum_{n=0}^m 
\Pi_{n,N}$.
Hence, the probability
$\Tr \sigma_{P^{(N)}}^{(N)}
\{
\frac{-1}{\log N}\log \sigma_{P^{(N)}}^{(N)} \le c
\}
$ is evaluated by
\begin{align}
\Tr \sigma_{P^{(N)}}^{(N)}
\{
\frac{-1}{\log N}\log \sigma_{P^{(N)}}^{(N)} \le c
\}
\le
\sum_{n=0}^{m}
\int_{[0,\sqrt{E}]}
e^{-r^2}
\frac{r^{2n}}{n!}
\bar{P}^{(N)}(dr ),\Label{2-1-1}
\end{align}
where the integer $m$ is the maximum integer less than $c$.
Now, we treat the opposite inequality when $c$ is not an integer
and 
\begin{align}
N \ge e^{E/c}\Label{2-6}
\end{align}
by considering $m+1$ cases: namely the cases $(0),(1), \ldots, (m)$.

Case (0):
Assume that the inequality
\begin{align}
(\int_{[0,\sqrt{E}]}
e^{-r^2}
\frac{r^{2n}}{n!}
\bar{P}^{(N)}(dr ))
\frac{1}{\choose {N+n-1}{N-1}}
\ge N^{n-c} \Label{2-1-2}
\end{align}
holds for all $n\le m$. 
Then, for $n \le m$,
\begin{align*}
\frac{1}{N^{c}}
\le
(\int_{[0,\sqrt{E}]}
e^{-r^2}
\frac{r^{2n}}{n!}
\bar{P}^{(N)}(dr ))
\frac{1}{\choose {N+n-1}{N-1}},
\end{align*}
which implies that
$\frac{-1}{\log N} \log \lambda_n^N \le c$.
Thus, the equality of (\ref{2-1-1}) holds.

Case ($n$) ($n=1, \ldots, m$):
Assume that the inequality (\ref{2-1-2}) does not hold for the integer $n$.
Note that the inequality (\ref{2-1-2}) always holds for $n=0$ for sufficiently large $N$.
Since $x\mapsto x^n$ is a convex function,
the average $r^2_N:= 
\int_{[0,\sqrt{E}]}
r^2 \bar{P}^{(N)}(dr )$
satisfies 
\begin{align*}
e^{-E}\frac{(r^2_N)^n}{n!}
\frac{1}{(1+\frac{n-1}{N})^{n}}
\le 
e^{-E}\frac{
\int_{[0,\sqrt{E}]}
r^{2n}
\bar{P}^{(N)}(dr )
}{n!}
\frac{1}{(1+\frac{n-1}{N})^{n}}
\le 
(\int_{[0,\sqrt{E}]}
e^{-r^2}
\frac{r^{2n}}{n!}
\bar{P}^{(N)}(dr ))
\frac{N^{n}}{\choose {N+n-1}{N-1}}
< N^{n-c} 
\end{align*}
because 
$N^n 
(1+\frac{n-1}{N})^{n}
\ge 
\choose {N+n-1}{N-1}$.
That is,
\begin{align*}
r^2_N
< L_n(N):=
\left(
e^{E}n!
(1+\frac{n-1}{N})^{n}
N^{n-c} 
\right)^{\frac{1}{n}}.
\end{align*}
The eigenvalue corresponding to the vector $|0,\ldots, 0\rangle$
is
$
\int_{[0,\sqrt{E}]}
e^{-r^2}
\bar{P}^{(N)}(dr )$, which is larger than 
$e^{-r_N^2}$ because $x \mapsto e^{-x}$ is convex.
Since 
the condition (\ref{2-6}) guarantees that
$e^{-r_N^2} \ge e^{-E} \ge \frac{1}{N^{c}}$,
we obtain
$\frac{-1}{\log N} \log \lambda_0^N \le c$.
Thus,
\begin{align*}
\Tr \sigma_{P^{(N)}}^{(N)}
\{
\frac{-1}{\log N}\log \sigma_{P^{(N)}}^{(N)} \le c
\}
\ge
e^{-L_n(N)}.
\end{align*}
Since the right hand side of (\ref{2-1-1}) is less than $1$,
\begin{align*}
\sum_{n=0}^{m}
\int_{[0,\sqrt{E}]}
e^{-r^2}
\frac{r^{2n}}{n!}
\bar{P}^{(N)}(dr )
-
\Tr \sigma_{P^{(N)}}^{(N)}
\{
\frac{-1}{\log N}\log \sigma_{P^{(N)}}^{(N)} \le c
\}
\le
1- e^{-L_n(N)}.
\end{align*}
Considering all cases $(0),(1), \ldots, (m)$,
we obtain
\begin{align*}
\sum_{n=0}^{m}
\int_{[0,\sqrt{E}]}
e^{-r^2}
\frac{r^{2n}}{n!}
\bar{P}^{(N)}(dr )
-
\Tr \sigma_{P^{(N)}}^{(N)}
\{
\frac{-1}{\log N}\log \sigma_{P^{(N)}}^{(N)} \le c
\}
\le
\max_{1 \le n \le m}
1- e^{-L_n(N)},
\end{align*}
which goes to $0$.
Therefore,
\begin{align*}
\liminf_{N \to\infty}
\Tr \sigma_{P^{(N)}}^{(N)}
\{
\frac{-1}{\log N}\log \sigma_{P^{(N)}}^{(N)} \le c
\}
=
\liminf_{N \to\infty}
\int_{[0,\sqrt{E}]}
\sum_{n=0}^{m}
e^{-r^2}
\frac{r^{2n}}{n!}
\bar{P}^{(N)}(dr )
\ge
\sum_{n=0}^{m}
e^{-E}
\frac{E^{n}}{n!},
\end{align*}
where equality holds when $\bar{P}^{(N)}$ is the delta measure on $\{\sqrt{E}\}$.
Therefore, we obtain (\ref{2-1-5}).

\section{Group Covariant Approach}\label{s6}
First, we consider the case when 
the state $|\varphi(i)\rangle$ is given as
$|f_i\rangle:=
\sqrt{p} |0\rangle + \sqrt{1-p}|i\rangle $
where $p=e^{-E}$,
$M=|\Phi|$ and $|0\rangle, |1\rangle, \ldots,|M \rangle$
are orthogonal to each other.
We focus on the permutation group $S_M$ whose representation $V$ 
is given as
\begin{align*}
V_g(|i\rangle)=
|g(i)\rangle, \quad
V_g(|0\rangle)=
|0\rangle
\end{align*}
for any $g \in S_M$.
Any one-dimensional subspace of ${\cal K}_1:=< |0\rangle, |S\rangle >$
is an irreducible space, where $|S\rangle
:= \frac{1}{\sqrt{M}}\sum_{j=0}^M |j \rangle$.
The remaining irreducible space ${\cal K}_2$ is the orthogonal space of ${\cal K}_1$.
We now define the states 
$|S'\rangle$ and $|i'\rangle$
as
\begin{align*}
|S'\rangle
&:=
\frac{1}{\sqrt{p+(1-p)/M}}
(\sqrt{p} |0\rangle+ \sqrt{1-p} \frac{1}{\sqrt{M}}|S\rangle)\\
|i'\rangle
&:=
\frac{\sqrt{M}}{\sqrt{M-1}}(|i\rangle - \frac{1}{\sqrt{M}}|S\rangle).
\end{align*}
Note that $|S'\rangle$ belongs to ${\cal K}_1$,
and $|i'\rangle$ belongs to ${\cal K}_2$.
Then, the state $|f_i\rangle$ can be written as
\begin{align*}
|f_i\rangle=
\sqrt{p+(1-p)/M} |S'\rangle +
\sqrt{(1-p)\frac{M}{M-1}}|i'\rangle.
\end{align*}
Thus, all states $|f_i\rangle$ belong to the space 
${\cal K}_3 \oplus {\cal K}_2$, where ${\cal K}_3$ is the one-dimensional space spanned by 
$|S'\rangle$.
Since the average correct probability 
$\frac{1}{M}
\sum_{j=1}^M \langle f_i |Y_i| f_i\rangle$ is invariant with respect to the action of the permutation group:
\begin{align*}
\frac{1}{M}
\sum_{j=1}^M \langle f_i |Y_i| f_i\rangle
=
\frac{1}{M}
\sum_{j=1}^M \langle f_i |
V_g
Y_{g^{-1}(i)}
V_g^{\dagger}
| f_i\rangle, \quad \forall g \in S_M,
\end{align*}
we can apply Holevo\cite{Hol}'s group covariant measurement theory.
Thus, our optimization problem can be 
restricted to an optimization problem among the POVM 
$\{Y_j\}_j$ of the following form:
\begin{align*}
Y_j= | u_j \rangle \langle u_j|, 
\end{align*}
where $| u_j \rangle
=\frac{1}{\sqrt{M}}|S'\rangle
+\frac{\sqrt{M-1}}{\sqrt{M}}| v_j \rangle$ 
and $| v_j \rangle$ is a unit vector in ${\cal K}_2$.
For this restriction,
the maximum value of $\langle f_i |Y_i| f_i\rangle$
is realized when $| v_j \rangle= |i'\rangle$,
and the maximum value is 
\begin{align*}
\left(
\frac{1}{\sqrt{M}} \sqrt{p+(1-p)/M} 
+
\sqrt{(1-p)\frac{M}{M-1}} \frac{\sqrt{M-1}}{\sqrt{M}}
\right)^2,
\end{align*}
which is equal to 
$
\left(
\frac{1}{M}\sqrt{1+(M-1)p}
+(1-\frac{1}{M})\sqrt{1-p}
\right)^2$.
Therefore, we obtain (\ref{2-1-7}).

Next, we prove the general case.
The state $|\varphi(i)\rangle$ has the form
$\sqrt{p} |e_0\rangle + \sqrt{1-p}|e_i\rangle $.
However, 
the vectors $|e_0\rangle, |e_1\rangle, \ldots, |e_M \rangle$
are not necessarily orthogonal to each other.
Define the TP-CP map ${\cal E}$ from the system spanned by $\{|f_1\rangle, \ldots,|f_M \rangle\}$ to the system spanned by $\{|e_0\rangle, |e_1\rangle, \ldots,|e_M \rangle\}$ as follows.
Here, $|f_j\rangle$ is the vector defined above.
First, we operate with the unitary operator $U$:
$U|f_j\rangle= |\varphi(i)\rangle\otimes |h_j\rangle$,
where the states $|h_j\rangle$ are an orthogonal basis on the additional system ${\cal K}_4$.
Next, we execute a partial trace with respect to the additional system ${\cal K}_4$.
Then, the TP-CP map ${\cal E}$ is defined as
\begin{align*}
{\cal E}(\rho):=\Tr_{{\cal K}_4}U \rho U^{\dagger}.
\end{align*}
Thus, any POVM $\{Y_j\}$ satisfies 
\begin{align*}
\langle \varphi(j)| Y_j |\varphi(j)\rangle
=
\langle f_j| {\cal E}^{\dagger} (Y_j )| f_j \rangle.
\end{align*}
Since $\{{\cal E}^{\dagger} (Y_j )\}$ satisfies the condition for POVM on 
the system spanned by $\{|f_1\rangle, \ldots,|f_M \rangle\}$,
we obtain the inequality (\ref{2-1-7}).

\section{Discussion}
The present paper discusses the relations between 
the average photon number constraint, the amount of transmitted information, 
and the average error probability.
While the second setting is based on an asymptotic framework,
the analysis of mutual information is not sufficient in this setting and an information spectrum approach is required.

As is shown in the second and third settings,
the average error probability is greater than $e^{-E}$ 
when the total average photon number is $E$ and the amount of transmitted information is sufficiently large.
In particular, Theorem \ref{th-9} guarantees that the minimum error probability $e^{-E}$ is realized when
the amount of the transmitted information is the logarithm of the number of transmitted pulses.
Indeed, such a code can be constructed as follows.
For an arbitrary integer $N$,
we define a code $\Phi$ with the set of messages
$\{ 1,\ldots, N\}$ as follows.
The encoder $\varphi$ is given as 
\begin{align*}
|\varphi(i)\rangle\langle\varphi(i)| 
= 
|0\rangle\langle 0|^{\otimes (i-1)}
\otimes 
|\alpha\rangle\langle \alpha|)
\otimes 
|0\rangle\langle 0|^{\otimes (n-i)},
\end{align*}
where $|\alpha|^2=E$.
The decoder $\{Y_i\}$ is as follows.
\begin{align*}
Y_i:= 
|0\rangle\langle 0|^{\otimes (i-1)}
\otimes 
(I-|0\rangle\langle 0|)
\otimes 
|0\rangle\langle 0|^{\otimes (n-i)}.
\end{align*}
Then, the error probability is
\begin{align*}
1- \langle\varphi(i)| Y_i |\varphi(i)\rangle
=1- e^{-E}.
\end{align*}
The above code is realizable with current technology.
This construction suggests that increasing the number of pulses yields quantum advantages over a classical Gaussian channel.
However, it is not so easy to increase the number of pulses for a fixed period in optical communication.
Development of a physical scheme to increase the number of pulses is required.
Exploring such a scheme remains a topic for future study.

\section*{Acknowledgement}
This research
was partially supported by a Grant-in-Aid for Scientific Research in the Priority Area `Deepening and Expansion of Statistical Mechanical Informatics (DEX-SMI)', No. 18079014
and a MEXT Grant-in-Aid for Young Scientists (A) No. 20686026.
The author thanks Professor Satoshi Ishizaka for interesting discussions.

\appendix

First, we prove the inequality 
\begin{align}
I(\epsilon| \bm{P},\bm{W})\ge
\overline{H}_+(\epsilon| \bm{W_P}).
\Label{1-30-9}
\end{align}
For this purpose,
we apply the discussion of Theorem 1 in Nagaoka-Hayashi\cite{N-H}
to the hypothesis testing 
$W_x^{(N)} \times P^{(N)}(dx)$ vs $W_{P^{(N)}}^{(N)} \times P^{(N)}(dx)$ 
on the composite system between 
the quantum system ${\cal H}^{(N)}$ and the classical system ${\cal X}^{(N)}$.
In this case, the sequence of sets of projections 
$\{ 
\{W_x^{(N)} -e^{\alpha_N a} W_{P^{(N)}}^{(N)} \le 0 \}
 \}_{x \in {\cal X}^{(N)}} \}_{N}$
yields the best test.
Choose an arbitrary real number $b_0$ satisfying
\begin{align}
b_0 >I(\epsilon| \bm{P},\bm{W}).
\Label{1-30-6}
\end{align}
Thus, when any sequence of sets of projections $\{ \{A_{x}^{(N)} \}_{x \in {\cal X}^{(N)}} \}_N$
satisfies the condition 
\begin{align}
\liminf_{N\to \infty}
\frac{-1}{a_N} \log \int_{{\cal X}^{(N)}}
\Tr W_{P^{(N)}}^{(N)} A_{x}^{(N)} P^{(N)}(dx)
\ge b_0,\Label{1-30-4}
\end{align}
then
\begin{align*}
\limsup_{N\to \infty}
 \int_{{\cal X}^{(N)}}
\Tr W_{x}^{(N)} (I-A_{x}^{(N)})
P^{(N)}(dx) > \epsilon.
\end{align*}

Define the projections
\begin{align*}
B_{b,N}:=
\{
I - e^{a_N b} W_{P^{(N)}}^{(N)} > 0 \}
\end{align*}
and 
\begin{align*}
B_{b,N,x}:=
\left\{
\begin{array}{ll}
\frac{1}{\Tr B_{b,N}W_{x}^{(N)} }B_{b,N}W_{x}^{(N)}B_{b,N}
 & \hbox{ if }B_{b,N} \neq 0 \\
0 & \hbox{ if }B_{b,N}=0.
\end{array}
\right.
\end{align*}
Then, 
the quantity $\overline{H}_+(\epsilon| \bm{W_P})$ can be expressed as follows.
\begin{align*}
\overline{H}_+(\epsilon| \bm{W_P})
&:= \sup_{b}
\left\{b\left|
\limsup_{N \to\infty}
\Tr W_{P^{(N)}}^{(N)} (I-B_{b,N})
\le \epsilon\right.\right\} \\
&= \sup_{b}
\left\{b \left|
\limsup_{N \to\infty}
\int_{{\cal X}^{(N)}}
\Tr [W_x^{(N)}(I-B_{b,N}) ]P(dx) 
\le \epsilon \right.\right\}\\
&= \sup_{b}
\left\{b \left|
\limsup_{N \to\infty}
\int_{{\cal X}^{(N)}}
\Tr [W_x^{(N)}(I-B_{b,N,x}) ]P(dx) 
\le \epsilon\right.\right\}.
\end{align*}
Since $B_{b,N,x} \subset B_{b,N}$,
all eigenvalues of 
$B_{b,N,x} W_{P^{(N)}}^{(N)}B_{b,N,x}$
are less than $e^{-a_N b}$.
Since 
$B_{b,N,x}$ is a rank-one projection or zero matrix,
$B_{b,N,x} W_{P^{(N)}}^{(N)}B_{b,N,x}
\le
e^{-a_N b} W_x^{(N)}$, which implies that
\begin{align*}
\Tr B_{b,N,x} W_{P^{(N)}}^{(N)}=
\Tr B_{b,N,x} W_{P^{(N)}}^{(N)}B_{b,N,x}
\le e^{-a_N b}.
\end{align*}
Thus,
$B_{b_0,N,x}$ satisfies the condition (\ref{1-30-4}).
Therefore,
\begin{align*}
\limsup_{N \to\infty}
\int_{{\cal X}^{(N)}}
\Tr [W_x^{(N)}(I-B_{b_0,N,x}) ]P(dx) 
> \epsilon,
\end{align*}
which implies that
$b_0 \ge \overline{H}_+(\epsilon| \bm{W_P})$.
Since $b_0$ is an arbitrary real number satisfying (\ref{1-30-6}),
the relation (\ref{1-30-9}) holds.

Next, we prove the opposite inequality
\begin{align}
 I(\epsilon| \bm{P},\bm{W})\ge
\overline{H}_+(\epsilon| \bm{W_P}).
\Label{1-30-7}
\end{align}

Let $\delta$ be an arbitrary real number satisfying that $\delta>0$.
Define the vector $|\phi_{N,x}\rangle$ by
\begin{align*}
|\phi_{N,x}\rangle \langle \phi_{N,x}|
=
\left\{
\begin{array}{ll}
\{W_x^{(N)} -e^{a_N (b+\delta)} W_{P^{(N)}}^{(N)} \ge 0 \}
& \hbox{ if }\{W_x^{(N)} -e^{a_N (b+\delta)} W_{P^{(N)}}^{(N)} \le 0 \} \neq 0\\
W_x^{(N)} & 
\hbox{ if }
\{W_x^{(N)} -e^{a_N (b+\delta)} W_{P^{(N)}}^{(N)} \le 0 \} = 0.
\end{array}
\right.
\end{align*}
Then, 
\begin{align*}
\langle \phi_{N,x}|
e^{a_N (b+\delta)} W_{P^{(N)}}^{(N)}
|\phi_{N,x}\rangle 
\le 1.
\end{align*}
That is,
\begin{align*}
\langle \phi_{N,x}|
e^{a_N b} W_{P^{(N)}}^{(N)}
|\phi_{N,x}\rangle 
\le e^{-a_N \delta}.
\end{align*}
Thus, the relation $
\{ e^{a_N b} W_{P^{(N)}}^{(N)} -I \ge 0 \}
\le e^{a_N b} W_{P^{(N)}}^{(N)}$ yields the result that
\begin{align*}
\langle \phi_{N,x}|
(I-B_{b,N})
|\phi_{N,x}\rangle 
=
\langle \phi_{N,x}|
\{ e^{a_N b} W_{P^{(N)}}^{(N)} -I \ge 0 \}
|\phi_{N,x}\rangle 
\le e^{-a_N \delta}.
\end{align*}
Since $B_{b,N}$ is a projection,
\begin{align*}
\|
|\phi_{N,x}\rangle \langle \phi_{N,x}|
-
B_{b,N}|\phi_{N,x}\rangle \langle \phi_{N,x}|B_{b,N}
\|_1
\le 2\sqrt{
 \langle \phi_{N,x}|
(I-B_{b,N})
|\phi_{N,x}\rangle}
\le 2e^{-\frac{a_N \delta}{2}}.
\end{align*}
Thus,
\begin{align*}
& 1- \Tr W_x^{(N)} 
\{W_x^{(N)} -e^{a_N (b+\delta)}W_{P^{(N)}}^{(N)} \le 0 \}
=
\Tr W_x^{(N)} 
|\phi_{N,x}\rangle\langle \phi_{N,x}|
\le
\Tr W_x^{(N)} 
B_{b,N}|\phi_{N,x}\rangle \langle \phi_{N,x}|B_{b,N}
+ e^{-a_N \delta} \\
\le &
\Tr W_x^{(N)} 
B_{b,N}
+ e^{-a_N \delta}
=
1- \Tr W_x^{(N)} (I-B_{b,N})
+ e^{-a_N \delta}.
\end{align*}
Therefore,
\begin{align*}
\limsup_{N \to \infty}
\int_{{\cal X}^{(N)}}
\Tr W_x^{(N)} 
\{W_x^{(N)} -e^{a_N (b+\delta)} W_{P^{(N)}}^{(N)} \le 0 \}
P^{(N)}(dx) 
\ge
\limsup_{N \to \infty}
\int_{{\cal X}^{(N)}}
\Tr W_x^{(N)} 
(I-B_{b,N})
P^{(N)}(dx),
\end{align*}
which implies that
$ I(\epsilon| \bm{P},\bm{W}) -\delta\le \overline{H}_+(\epsilon| \bm{W_P})$.
Since $\delta$ is an arbitrary positive real number,
we obtain (\ref{1-30-7}).

\end{document}